\DeclareMathOperator{\trace}{tr}
\newtheorem{theorem}{Theorem}
\newtheorem{definition}{Definition}
\title{A System of Billiard and Its Application to Information-Theoretic Entropy}
\author{Supriyo Dutta$^1$ $^2$ \thanks{Email: \texttt{dosupriyo@gmail.com}}, Partha Guha $^1$ $^3$ \thanks{Email: \texttt{partha@bose.res.in}} \\ $^1$ S. N. Bose National Centre for Basic Sciences \\ Block - JD, Sector - III, Salt Lake City, Kolkata - 700 106 \vspace{.25 cm} \\ $^2$ Centre for Theoretical Studics \\ Indian Institute of Technology Kharagpur \\ Kharagpur, India - 721302 \vspace{.25 cm}  \\ $^3$ Department of Mathematics \thanks{Address after 1st August 2020} \\ Khalifa University, Abu Dhabi, UAE}
\date{} 
\begin{document}
	\maketitle

	\begin{abstract}
		In this article, we define an information-theoretic entropy based on the Ihara zeta function of a graph which is called the Ihara entropy. A dynamical system consists of a billiard ball and a set of reflectors correspond to a combinatorial graph. The reflectors are represented by the vertices of the graph. Movement of the billiard ball between two reflectors is represented by the edges. The prime cycles of this graph generate the bi-infinite sequences of the corresponding symbolic dynamical system. The number of different prime cycles of a given length can be expressed in terms of the adjacency matrix of the oriented line graph. It also constructs the formal power series expansion of Ihara zeta function. Therefore, the Ihara entropy has a deep connection with the dynamical system of billiards. As an information-theoretic entropy, it fulfils the generalized Shannon-Khinchin axioms. It is a weakly decomposable entropy whose composition law is given by the Lazard formal group law.
	\end{abstract}

	\section{Introduction}
	
		In information theory, entropy is a measure of information. The information is the uncertainty which is inherent in a probability distribution. Shannon entropy is a well-known measure of information. The idea of entropy is diversely studied in the literature of thermodynamics, information theory, dynamical system, graph theory, and social science. The community of mathematical physics is interested to generalize the concept of entropy due to its emerging applications in economics, astrophysics, and informatics. In recent years, the generalization of entropy is a crucial topic for the investigations in mathematics. 
		
		There are different approaches to generalize entropy, in literature. An entropy is a function $S: \{\mathcal{P}\} \rightarrow \mathbb{R}^+ \cup \{0\}$ over the set of all probability distribution $\{\mathcal{P}\}$ satisfying the Shannon-Khincin axioms or their generalizations. The function is independent of the probability distributions. The literature of generalized entropy is concerned with the foundation and properties of the entropy functions. To define new entropy functions we introduce a number of parameters in the expression of Shannon entropy. These parameters may not have any physical significance. The Tsallis entropy is a generalized entropy with a single parameter $q$. Given a discrete probability distribution $\mathcal{P} = \{p_i\}_{i = 1}^W$ the Tsallis entropy is defined by $S_q(\mathcal{P}) = \frac{1}{q - 1} \left(1 - \sum_{i = 1}^W p_i^q\right)$. Observe that $\lim_{q \rightarrow 1} S_q(\mathcal{P}) = S(\mathcal{P}) = - \sum_{i = 1}^W p_i \log(p_i)$, which is the Shannon entropy. Entropy with more than two parameters is also investigated in the literature. Note that, there is no dependence between the parameter $q$ and probability distribution. Therefore, different values of $q$ generate different measures of information for a particular $\mathcal{P}$. Another formulation for generalizing the Shannon entropy is replacing the logarithm with varieties of generalized logarithms, such as deformed logarithms, formal group logarithms, poly-logarithms etc. In this scenario also, the literature is relevant to the properties and the structure of the entropy function.
		
		Following similar ideas, we introduce the Ihara entropy in this article. The Ihara zeta function \cite{terras2010zeta, ihara1966discrete} of a combinatorial graph $G$ is defined by
		\begin{equation}\label{Ihara_zeta_1}
			\zeta_G(z) = \prod_{P}\left(1 - z^{\gamma(P)}\right)^{-1},
		\end{equation}
		where $P$ is a prime cycle in the graph $G$ of length $\gamma(P)$. The Ihara zeta function is defined on a class of graphs satisfying a number of particular characteristics. In this article, we present a physical meaning of these characteristics. Consider the vertices of the graph as reflectors and edges as the movement of a billiard ball between them. It helps us to present the dynamical system as a symbolic dynamical system. The Ihara zeta function acts as a Ruelle zeta function for this system. There are invertible formal power series \cite{dieudonne1973introduction} which can be expressed in terms of Ihara zeta function. We consider one of them as a formal group logarithm, which replaces the natural logarithm for Shannon entropy. The new generalized entropy of probability distributions is mentioned as Ihara entropy, which depends on the structure of graphs. We then prove that this entropy fulfills the Shannon-Khinchin axioms. A number of formal group-theoretic entropy are recently introduced in literature \cite{tempesta2015theorem, tempesta2016beyond, dutta2019ihara}. This article discusses the dynamical system theoretic nature of this entropy. The entropy function depends on the prime cycles of a graph, which are induced by the movement of billiard ball between reflectors. Therefore, the billiard dynamics inherent in the entropy function. Another important characteristic is that the new entropy is a member of the one-parameter class of entropy. This parameter scales a probability distribution in the domain of Ihara zeta function. In addition, this entropy is a measure of uncertainty in a probability distribution and different from the graph entropy or the dynamical entropy. 
		
		This article is distributed into four sections. In section 2, we present a model of billiard dynamics. This section describes a combinatorial graph associated with a billiard dynamical system. It also introduces a symbolic dynamical system where the symbols are the edges of the graph. The bi-infinite sequences of symbols represent the bi-infinite walks, which can be decomposed into prime cycles. The next section is dedicated to the Ihara zeta function and its formal power series representations. Here we define the Ihara entropy and discuss its characteristics. Then we conclude this article.

	\section{A model of billiard dynamics}

		This article considers a particular model of the motion of a billiard ball on a smooth plane. At least four round shaped reflectors are placed at arbitrarily chosen positions on a smooth plain, such that, they are not arranged on a single straight line. A billiard ball moves between the reflectors and reflected elastically when it collides with a reflector. The ball can not be reflected on the same reflector consecutively. We are not interested in the radius of the reflectors, their internal distance, initial and terminal position of the ball, as well as initial speed and angles of reflections of the billiard.
		
		To associate a combinatorial graph with this system we assume the reflectors as the vertices. There is an edge between two vertices if a ball can be reflected between the corresponding reflectors. A ball can move in any directions, between two reflectors. Therefore, each edge has two opposite orientations. It is assumed that the ball can not be consequently reflected on the same reflector. Thus, there is no loop on the vertices. The reflectors do not form a straight line. Hence, the path graphs are excluded from our consideration. This assumption also indicates that there is no vertex in these graphs which is adjacent to only one vertex. A cycle is also excluded from our discussion. We can arrange the reflectors in a cycle. In this arrangement, if a ball moves from the reflectors $a$ to another reflector $b$, then it has a chance to move towards another reflector $c$ which is located nearly $b$. Combining all these observations, we find that a graph $G = (V(G), E(G))$ describing the dynamics of billiard under our consideration, is a simple, finite, connected, and undirected graph without any vertex of degree one. In addition, $G$ is neither a cycle graph nor a path graph. We call them admissible graphs. Different arrangements of the reflectors are represented by different admissible graphs. As an example, consider figure \ref{system_of_reflectors} which contains a set of reflectors which are represented by circles. This system can be represented as the combinatorial graph depicted in figure \ref{graph_representation}.
		
		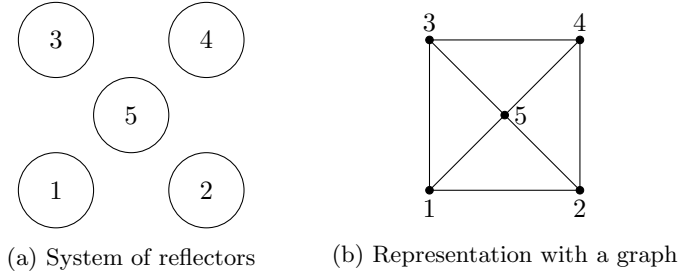
\begin{figure}
			\centering 
			\begin{subfigure}{0.4\textwidth}
				\centering 
				\begin{tikzpicture}
				\draw (0, 0) circle [radius= .5cm];
				\node at (0, 0) {$1$};
				\draw (2, 0) circle [radius= .5cm];
				\node at (2, 0) {$2$};
				\draw (0, 2) circle [radius= .5cm];
				\node at (0, 2) {$3$};
				\draw (2, 2) circle [radius= .5cm];
				\node at (2, 2) {$4$};
				\draw (1, 1) circle [radius= .5cm];
				\node at (1, 1) {$5$};
				\end{tikzpicture}
				\caption{System of reflectors} 
				\label{system_of_reflectors} 
			\end{subfigure}
			\begin{subfigure}{.4\textwidth}
				\centering 
				\begin{tikzpicture}
				\draw [fill] (0, 0) circle [radius= .05];
				\node [below] at (0, 0) {$1$};
				\draw [fill] (2, 0) circle [radius= .05];
				\node [below] at (2, 0) {$2$};
				\draw [fill] (0, 2) circle [radius= .05];
				\node [above] at (0, 2) {$3$};
				\draw [fill] (2, 2) circle [radius= .05];
				\node [above] at (2, 2) {$4$};
				\draw [fill] (1, 1) circle [radius= .05];
				\node [right] at (1, 1) {$5$};
				\draw (0, 0) -- (0, 2) -- (2, 2) -- (2, 0) -- (0, 0) -- (1, 1) -- (2, 2);
				\draw (2, 0) -- (1, 1) -- (0, 2);
				\end{tikzpicture}
				\caption{Representation with a graph} 
				\label{graph_representation}
			\end{subfigure} 
			\caption{Reflectors and the corresponding graph}
		\end{figure}
		
		Let a particular billiard system be represented by a graph $G$ with $n$ vertices and $m$ edges. As every edge has two opposite orientations the set of all orientations can be collected as 
		\begin{equation}\label{alphabet}
		\mathcal{E} = \{e^{(1)}, e^{(2)}, \dots e^{(m)}, e^{(m+1)} = (e^{(1)})^{-1}, \dots e^{(2m)} = (e^{(m)})^{-1} \}.
		\end{equation}
		Here, $e^{(k)} = (u,v)$ is a directed edge with initial and terminating vertices $i(e^{(k)}) = u$ and $t(e^{(k)}) = v$, respectively. Every directed edge $e^{(k)}$ has an inverse $e^{(m + k)} = (e^{(k)})^{-1} = (v, u)$ in $\mathcal{E}$ for $k = 1, 2, \dots m$.
	
		The movement of a ball between the reflectors generates a directed path, which is a sequence of directed edges, in the graph. We are not interested in the initial and terminal position of the billiard ball. Hence, we assume that the sequence of directed edges forms a bi-infinite, directed walk on the graph. Two directed edges $e^{(i)}$ and $e^{(j)}$ consecutively arise on a walk if $v = t(e^{(i)}) = i(e^{(j)})$. It indicates that a ball which is moving along the direction of $e^{(i)}$ will follow the direction of $e^{(j)}$ after getting reflected at $v$. We describe that the edges $e^{(i)}$ and $e^{(j)} \in \mathcal{E}$ are composeble and the composition is represented by $e^{(i)}e^{(j)}$. Note that, the set $\mathcal{E}$ forms a collections of symbols, that is an alphabet \cite{lind1995introduction} of a symbolic dynamic system. Symbolically, we write a bi-infinite walk $w = \prod_{i \in \mathbb{Z}} e_i = \dots e_{-2} e_{-1} e_0 e_1 e_2 \dots$, such that, any two constitutive edges $e_i$ and $e_{i + 1}$ are composeble for all $i \in \mathbb{Z}$. The set of all such walks is a full $\mathcal{E}$-shift, which is denoted by $\mathcal{E}^{\mathbb{Z}}$. A block over $\mathcal{E}$ is a walk $Q = e_1e_2\dots e_{\gamma(Q)}$ of finite length $\gamma(Q)$. The ball rarely reflects between two reflectors repeatedly. Therefore, we neglect situation $e_{i+1} = (e_{i})^{-1}$, in a bi-infinite walk. Now we define a set of forbidden blocks $\mathcal{F} = \{e^{(i)}(e^{(i)})^{-1}: e^{(i)} \in \mathcal{E}\}$. Let $X_{\mathcal{F}}$ be the subset of $\mathcal{E}^\mathbb{Z}$ which does not contain any block in $\mathcal{F}$. Note that, $X_\mathcal{F}$ is a shift space of finite type. A cycle $W = e_1e_2\dots e_k$ of length $k$ is a closed finite walk, such that, $i(e_1) = t(e_k)$. Two cycles $W_{1} = e_{1,1} e_{1,2} \dots e_{1,k}$ and $W_{2} = e_{2,1} e_{2,2} \dots e_{2,k}$ are equivalent if $e_{2,1} = e_{1, r}, e_{2,2} = e_{1, (r+1)}, \dots e_{2,(k - r + 1)} = e_{1,k}, e_{2,(k - r + 2)} = e_{1,1}, \dots e_{2,k} = e_{1,(r - 1)}$ for some $r \in \{1, 2, \dots k\}$. The set of equivalence classes of cycles are called prime cycles. The length of a prime cycle $P$ is denoted by $\gamma(P)$. Two primes $P_1 = e_{1,1} e_{1,2} \dots e_{1, k_1}$ and $P_2 = e_{2,1} e_{2,2} \dots e_{2, k_2}$ are composable if $t(e_{1, k_1}) = i(e_{2, k_2})$, and the composition is denoted by $P^{(1)}P^{(2)} = e_{1,1} e_{1,2} \dots e_{1, k_1}e_{2,1} e_{2,2} \dots e_{2, k_2}$. In a similar fashion, we define product of a prime $P$ and a finite walk $Q$, or product of two finite walks $Q_1$ and $Q_2$. A simple observation indicates that any bi-infinite walk $w \in \mathcal{E}^{\mathbb{Z}}$ can be expressed as $w = \prod_{i \in \mathbb{Z}} P_{i}^{p_i} Q_i$, where $p_i$ is the number of consecutive repetitions of the prime $P_i$ and $Q_i$ is the finite walk after the repetition of prime $P_i$.

	\section{Ihara zeta function and entropy}
		
		Recall from the last section that the movement of the billiard ball in the system of reflectors generates bi-infinite walks on a graph $G$. To illustrate the properties of these walks we consider the oriented edges as the vertices of a new graph, which is called the oriented line graph. Formally, the oriented line graph $\overline{G} = (V(\overline{G}), E(\overline{G}))$ of the graph $G$ is represented by $V(\overline{G}) = \mathcal{E}$ and
		\begin{equation}
			E(\overline{G})  = \{(e^{(i)}, e^{(j)}) \in \mathcal{E} \times \mathcal{E} : t(e^{(i)}) = i(e^{(j)}) ~\text{and}~ i(e^{(i)}) \neq t(e^{(j)})\}. 
		\end{equation}
		It is known that the number of prime cycles of length $k$ starting and ending at the vertex $e$ is expressed as the $e$-th element of the $T^k$, where $T = (t_{(e^{(i)}, e^{(j)})})_{2m \times 2m}$ is the adjacency matrix of the graph $\overline{G}$ defined by,
		\begin{equation}
			t_{(e^{(i)}, e^{(j)})} = \begin{cases} 1 & \text{if}~ (e^{(i)}, e^{(j)}) \in E(\overline{G}), ~\text{and}\\ 0 & \text{if}~ (e^{(i)}, e^{(j)}) \notin E(\overline{G}). \end{cases} 
		\end{equation}
		Therefore, $\trace(T^k)$ represents the number of all cycles of length $k$, which is a non-negative integer. Now the generating function for the number of cycles in a graph is given by $f(z) = \sum_{k = 1}^\infty \frac{\trace(T^k)}{k}z^k$. The Ihara zeta function for the graph $G$ is alternatively represented by the formal power series 
		\begin{equation}\label{Ihara_zeta_2}
		\zeta_G(z) = \exp\left(\sum_{k = 1}^\infty \frac{\trace(T^k)}{k}z^k\right),
		\end{equation}
		where $|z| < \frac{1}{\lambda}$ \cite{kotani20002}. Here, $\lambda$ is the greatest eigenvalue of $T$, which is a positive number.
		
		In this work, we are interested in entropy of a probability distribution depending on the billiard dynamics. As probability is a positive real number, we restrict $\zeta_G(z)$ to the real interval $[0, \frac{1}{\lambda})$. The restricted function $\zeta_G(x) = \exp\left(\sum_{k = 1}^\infty \frac{\trace(T^{k})}{k}x^{k}\right)$, such that, $\zeta_G(x): [0, \frac{1}{\lambda}) \rightarrow \mathbb{R}$ can be expressed as
		\begin{equation}\label{zeta_with_real_coefficients}
			\begin{split} 
				\zeta_G(x) = & 1 + \sum_{k = 1}^\infty \frac{\trace(T^{k})}{k}x^{k} + \frac{1}{2!}\left(\sum_{k = 1}^\infty \frac{\trace(T^{k})}{k}x^{k}\right)^2 + \frac{1}{3!} \left(\sum_{k = 1}^\infty \frac{\trace(T^{k})}{k}x^{k} \right)^3 + \dots \\
				= & 1 + c_1x + c_2x^2 + c_3x^3 + c_4x^4 + c_5x^5 + \dots. 
			\end{split} 
		\end{equation}
		In the above expression, $c_1 = \trace(T) = 0$, since $T$ is an adjacency matrix of a graph without a loop. Also, $c_2 = \frac{\trace(T^2)}{2}, c_3 = \frac{\trace{T^3}}{3}, c_4 =  \frac{\trace(T^4)}{4} + \frac{(\trace(T^2))^2}{8}, c_5 = \frac{\trace(T^5)}{5} + \frac{\trace(T^2)\trace(T^3)}{6}$. As, $\trace{T^k}$ is non-negative for all $k$, and the coefficients $c_2, c_3, \dots$ of $\zeta_G(x)$ in equation (\ref{zeta_with_real_coefficients}) are all positive. Hence, for all $x \in [0, \frac{1}{\lambda})$ $\zeta_G(x) > 0$, as well as all its derivatives exists and positive. Clearly, $\zeta_G(x), \zeta'_G(x), \zeta''_G(x), \dots$ are all monotone increasing functions.
		
		Given two formal power series \cite{brewer2014algebraic} $S = \sum_{i = 1}^\infty s_i x^i$ and $T = \sum_{i = 1}^\infty t_i x^i$ the composition $S\circ T(x)$ is defined by another power series $S\circ T (x) = S(T(x))$. The power series $S$ is said to be the compositional inverse of $T$ if $S(T(x)) = x$ holds. The power series $T$ has an inverse with respect to the composition if and only if $t_1 = 1$. The coefficients in equation (\ref{zeta_with_real_coefficients}) suggest that the formal power series of  $\zeta_G(x)$ has no compositional inverse.
		
		The formal group entropy \cite{tempesta2015theorem} of a discrete probability distribution $\mathcal{P} = \{p_i\}_{i = 1}^W$ is given by $S(\mathcal{P}) = \sum_{i = 1}^W p_i \mathcal{G}\left(\log\left(\frac{1}{p_i}\right)\right)$, where $\mathcal{G}$ is an invertible formal power series. Let $t = \log(\frac{1}{p})$, which refers $p = e^{-t}$. Now, the equation (\ref{zeta_with_real_coefficients}) indicates that
		\begin{equation}
			\zeta_G(ae^{-t}) = 1 + c_2 (ae^{-t})^2 + c_3 (ae^{-t})^3 + c_4(ae^{-t})^4 + \dots.
		\end{equation}
		Here, $a$ is a non-zero scaling factor, such that, $0 \leq ae^{-t} < \frac{1}{\lambda}$. In addition,
		\begin{equation}
			\zeta_G(a) = 1 + c_2 a^2 + c_3 a^3 + c_4 a^4 + \dots.
		\end{equation}
		Hence,
		\begin{equation}
			\zeta_G(ae^{-t}) - \zeta_G(a) = c_2 a^2(e^{-2t} - 1) + c_3 a^3(e^{-3t} - 1) + c_4 a^4(e^{-4t} - 1) + \dots 
		\end{equation}
		Note that, $e^{-t} - 1 = -t + \frac{t^2}{2!} - \frac{t^3}{3!} + \dots$. Now,
		\begin{equation}
			\begin{split} 
				& \zeta_G(ae^{-t}) - \zeta_G(a) + e^{-t} - 1\\
				= & (e^{-t} - 1) + c_2 a^2(e^{-2t} - 1) + c_3 a^3(e^{-3t} - 1) + c_4 a^4(e^{-4t} - 1) + \dots
		\end{split} 
		\end{equation}
		Clearly, $\zeta_G(ae^{-t}) - \zeta_G(a) + e^{-t} - 1$ has no constant term. The coefficient of $t$ in the power series of $\zeta_G(ae^{-t}) - \zeta_G(a) + e^{-t} - 1$ is 
		\begin{equation}
			\begin{split}
				& \frac{d}{dt}\left[\zeta_G(ae^{-t}) - \zeta_G(a) + e^{-t} - 1\right]|_{t = 0}\\
				= & \left[-a e^{-t} \zeta_G'(ae^{-t}) - e^{-t} \right] |_{t = 0} = - \left[1 + a \zeta'_G(a)\right].
			\end{split}
		\end{equation}
		Hence, the formal power series corresponding to 
		\begin{equation}\label{formal_power_series_for_G}
			\mathcal{G}(t) = \frac{\zeta_G(ae^{-t}) - \zeta_G(a) + e^{-t} - 1}{- (1 + a \zeta'_G(a))}
		\end{equation}
		has zero constant coefficient as well as the coefficient for $t$ is $1$. Therefore, there exists a formal power series $\mathcal{F}(s)$, such that, $\mathcal{F}(\mathcal{G}(t)) = \mathcal{G}(\mathcal{F}(t)) = t$. Now, replacing $t = \log(\frac{1}{p})$ in the expression of $\mathcal{G}(t)$ we find 
		\begin{equation}
			\mathcal{G}\left(\log\left(\frac{1}{p}\right)\right) = \frac{\zeta_G(a) + 1 - (\zeta_G(ap) + p) }{1 + a \zeta'_G(a)}.
		\end{equation}
		As $\zeta_G(x)$ is a monotone increasing function, $\zeta_G(a) \geq \zeta_G(ap) > 0$. Therefore, $\mathcal{G}\left(\log\left(\frac{1}{p}\right)\right) \geq 0$. It leads us to construct the formal group theoretic entropy associated to the Ihara zeta function, which is defined below. 
		\begin{definition}\label{One_parameter_trace_from_entropy}
			Given a graph $G$ the Ihara entropy of a discrete probability distribution $\mathcal{P} = \{p_i\}_{i = 1}^W$ is defined by
			$$S_G(\mathcal{P}) = \sum_{i = 1}^W p_i \mathcal{G} \left( \log\left( \frac{1}{p_i} \right)\right) = \sum_{i = 1}^W p_i \frac{\zeta_G(a) + 1 - (\zeta_G(ap_i) + p_i) }{1 + a \zeta'_G(a)},$$
			where $0 < a < \frac{1}{\lambda}$. Here, $\lambda$ is the largest eigenvalue of $\overline{G}$, which is the oriented line graph of $G$.
		\end{definition}
		
		In information theory, an entropy $S(\mathcal{P})$ of a probability distribution $\mathcal{P}$ satisfies the Shannon-Khinchin axioms \cite{shannon1948mathematical, khinchin2013mathematical} which are mentioned below:
		\begin{enumerate}
			\item \label{axiom_1} 
			The function $S([\mathcal{P}])$ is continuous with respect to all its arguments $p_i$, where $[\mathcal{P}] = \{p_i\}_{i = 1}^W$ is a discrete probability distribution.
			\item \label{axiom_2}
			Adding a zero probability event to a probability distribution does not alter its entropy, that is $S([\mathcal{P}_1]) = S([\mathcal{P}])$ where $[\mathcal{P}_1] = \{p_i\}_{i = 1}^W \cup \{0\}$.
			\item \label{axiom_3}
			The function $S([\mathcal{P}])$ is maximum for the uniform distribution $[\mathcal{P}] = \{\frac{1}{W}\}_{i = 1}^W$.
			\item \label{axiom_4}
			Given two independent subsystems $A, B$ of a statistical system, $S(A + B) = S(A) + S(B)$.
		\end{enumerate}
		
		Define a function $s:[0, 1] \rightarrow \mathbb{R}$, such that,
		\begin{equation}\label{s_p}
			s(p) = p \times \frac{\zeta_G(a) + 1 - (\zeta_G(ap) + p) }{1 + a \zeta'_G(a)}.
		\end{equation}
		Therefore, the Ihara entropy $S_G(\mathcal{P}) = \sum_{i = 1}^W s(p)$. Clearly, $s(p)$ is a continuous function of $p$, that is, $S_G(\mathcal{P})$ is also continuous with respect to all its arguments $p_i$ for $i = 1, 2, \dots W$. Thus, $S_G(\mathcal{P})$ satisfies the axiom \ref{axiom_1}. The axiom \ref{axiom_2} also trivially satisfied as $s(0) = 0$, that is $0$ probability alters nothing in $S(\mathcal{P})$. The axiom \ref{axiom_3} and axiom \ref{axiom_4} are non-trivial which are illustrated in the following two theorems. 

		\begin{theorem}\label{unique_global maxiam}
			There exists a global maxima of $s(p)$ in $(0, 1)$, where $s(p)$ is defined in equation (\ref{s_p}).
		\end{theorem}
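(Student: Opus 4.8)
The plan is to show that $s(p)$ is continuous on the compact interval $[0,1]$, identify its behaviour at the endpoints, and then use a derivative computation to locate an interior critical point that must be the maximum. Since $s(0) = 0$ and $s(1) = 1 \cdot \frac{\zeta_G(a) + 1 - (\zeta_G(a) + 1)}{1 + a\zeta'_G(a)} = 0$, the function vanishes at both endpoints. Moreover $s(p) \geq 0$ on $[0,1]$, because $\zeta_G$ is monotone increasing (established in the excerpt), so $\zeta_G(a) \geq \zeta_G(ap)$ and $1 \geq p$, while the denominator $1 + a\zeta'_G(a)$ is positive. Hence $s$ is a nonnegative continuous function on $[0,1]$ that vanishes at both ends; by the extreme value theorem it attains a maximum, and that maximum is attained at some interior point $p^\ast \in (0,1)$ provided $s$ is not identically zero.

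First I would verify that $s$ is not identically zero: for any $p \in (0,1)$ we have $\zeta_G(a) > \zeta_G(ap)$ strictly (since $\zeta'_G > 0$ everywhere on $[0, 1/\lambda)$ and $a > ap$) and $1 > p$ strictly, so $s(p) > 0$. This already forces the maximum to be interior, giving existence of a global maximum in $(0,1)$ as claimed. To make the statement sharper and prepare for the next theorem, I would then compute $s'(p)$. Writing $D = 1 + a\zeta'_G(a)$ for the constant denominator, we have
\begin{equation}
	D \cdot s'(p) = \zeta_G(a) + 1 - \zeta_G(ap) - p - p\left(a\zeta'_G(ap) + 1\right) = \zeta_G(a) + 1 - \zeta_G(ap) - ap\,\zeta'_G(ap) - 2p.
\end{equation}
At $p \to 0^+$ this tends to $\zeta_G(a) + 1 - \zeta_G(0) - 0 - 0 = \zeta_G(a) > 0$, and at $p = 1$ it equals $\zeta_G(a) + 1 - \zeta_G(a) - a\zeta'_G(a) - 2 = -1 - a\zeta'_G(a) < 0$. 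So $s'$ changes sign on $(0,1)$, confirming an interior critical point.

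The main obstacle is uniqueness of the maximum (which the theorem phrases as "global maxima" but the companion discussion of axiom 3 really needs to be a single point, at least morally). For this I would examine the sign of $s''(p)$, or equivalently the monotonicity of $D \cdot s'(p)$. Differentiating once more,
\begin{equation}
	D \cdot s''(p) = -a\zeta'_G(ap) - a\zeta'_G(ap) - a^2 p\,\zeta''_G(ap) - 2 = -2a\zeta'_G(ap) - a^2 p\,\zeta''_G(ap) - 2,
\end{equation}
which is strictly negative for all $p \in [0,1]$ because $\zeta'_G$ and $\zeta''_G$ are nonnegative there (all coefficients $c_k$ are nonnegative, as noted in the excerpt). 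Hence $s'$ is strictly decreasing, so it has exactly one zero in $(0,1)$, and $s$ is strictly concave; the critical point is the unique global maximum. I expect the only delicate point to be making the endpoint evaluations of $\zeta_G$ and $\zeta'_G$ rigorous from the power-series representation — i.e. justifying term-by-term differentiation inside the radius of convergence $1/\lambda$ — but since $a < 1/\lambda$ this is standard for power series and the monotonicity and positivity facts are already granted earlier in the paper.
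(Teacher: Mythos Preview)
Your proof is correct and follows essentially the same route as the paper: compute $s'(p)$, observe that the numerator is $\zeta_G(a)>0$ at $p=0$ and $-1-a\zeta'_G(a)<0$ at $p=1$, then use $s''(p)<0$ (equivalently, strict monotonicity of the numerator of $s'$) to conclude there is a unique interior critical point which is the global maximum. Your preliminary extreme-value-theorem argument from $s(0)=s(1)=0$ and $s>0$ on $(0,1)$ is an extra touch not in the paper, but the derivative analysis and the second-derivative computation match the paper's proof line for line.
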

		\begin{proof}  
			We have
			\begin{equation}
				s'(p) = \frac{1 + \zeta_G(a) - 2p - \zeta_G(a p) - a p \zeta'_G(a p)}{1 + a\zeta_G'(a)}.
			\end{equation}
			Now $s'(p) = 0$ holds if and only if 
			\begin{equation}
				h(p) = 1 + \zeta_G(a) - 2p - \zeta_G(a p) - a p \zeta'_G(a p)
			\end{equation}
			has a root in $(0,1)$. Equation (\ref{zeta_with_real_coefficients}) suggests that for any graph $G$ we have $\zeta_G(0) = 1$. Therefore, $h(0) = \zeta_G(a) > 0$ and $h(1) = -1 - a \zeta_G'(a) < 0$. As $h$ is a continuous function of $p$ there is at least one point $p = c$ in $(0, 1)$, such that, $h(c) = 0$ that is $s'(c) = 0$. Also, $h'(p) = -2 - 2a \zeta'_G(a p) - a^2 p \zeta''_G(a p) < 0$ for all $p$. Therefore, $h$ is strictly monotone decreasing function, that is $c$ is the unique point in $(0, 1)$ such that $s'(c) = 0$. Now, 
			\begin{equation}
				s''(p) = \frac{-2 - 2a \zeta'_G(a p) - a^2 p \zeta''_G(a p)}{1 + a\zeta_G'(a)} < 0,
			\end{equation}
			for all $p$. Hence, $c$ is a global maxima of $s(p)$ in $(0, 1)$.
		\end{proof} 

		The theorem \ref{unique_global maxiam} leads us to the conclusion that the entropy $S_G(\mathcal{P})$ considers the maximum value if $s(p_i)$ is maximum for all $p_i \in \mathcal{P}$. Thus, to maximize $S_G(\mathcal{P})$ we need $p_i = c$ for all $i$, which is the uniform distribution after a normalization. Therefore, the Ihara entropy mentioned in definition \ref{One_parameter_trace_from_entropy} fulfills the axiom \ref{axiom_3} of the Shannon-Khinchin axioms.
		
		We generalize the axiom \ref{axiom_4} of the Shannon-Khinchin axioms by utilize the Lazard formal group law. Recall that, a commutative one-dimensional formal group law over $\mathbb{R}$  is a formal power series $\Phi(x, y)$ with two indeterminates $x$ and $y$ of the form $\Phi(x, y) = x + y +$ higher order terms, such that
		\begin{equation}
			\begin{split}
				& \Phi(x, 0) = \Phi(0, x) = x, \\
				& \Phi (\Phi (x, y), z) = \Phi (x, \Phi (y, z)),\\
				& \Phi(x, y) = \Phi(y, x).
			\end{split}
		\end{equation}
		Recall from equation (\ref{formal_power_series_for_G}) that we have considered the $\mathcal{F}(s)$ as the compositional inverse of $\mathcal{G}(t)$. Now, the Lazard formal group law \cite{hazewinkel1978formal} is defined by the formal power series
		\begin{equation}\label{Lazard_formal_group_law} 
		\Phi(s_1, s_2) = \mathcal{G} \left( \mathcal{F}(s_1 ) + \mathcal{F}(s_2) \right).
		\end{equation}

		\begin{theorem}
			Let $\mathcal{P}_A = \left\{p_i^{(A)} \right\}_{i = 1}^{W_A}$ and $\mathcal{P}_B = \left\{p_j^{(B)}\right\}_{j = 1}^{W_B}$ be two independent probability distributions. Then the Ihara entropy of the joint probability distribution is given by
			$$S_G(\mathcal{P}_A \mathcal{P}_B) = \sum_{i = 1}^{W_A} \sum_{j = 1}^{W_B} p_i^{(A)} p_j^{(B)} \Phi \left(\mathcal{G} \left(\log \left(\frac{1}{p_{i}^{(A)}}\right) \right), \mathcal{G} \left(\log \left(\frac{1}{p_{j}^{(B)}}\right) \right) \right),$$
			where $\Phi$ is Lazard formal group law given by $\Phi(s_1, s_2) = \mathcal{G} \left( \mathcal{F}(s_1 ) + \mathcal{F}(s_2) \right)$.
		\end{theorem}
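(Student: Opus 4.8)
The plan is to reduce the theorem to the single algebraic identity $\mathcal{G}(t+u) = \Phi\bigl(\mathcal{G}(t),\mathcal{G}(u)\bigr)$, which is immediate from the definition (\ref{Lazard_formal_group_law}) of the Lazard formal group law once we recall, from the paragraph preceding (\ref{formal_power_series_for_G}), that $\mathcal{F}$ is the compositional inverse of $\mathcal{G}$, so that $\mathcal{F}(\mathcal{G}(t)) = t$.

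First I would invoke the hypothesis that $A$ and $B$ are independent, so that the joint distribution is $\mathcal{P}_A\mathcal{P}_B = \{p_i^{(A)} p_j^{(B)}\}_{i,j}$. Definition \ref{One_parameter_trace_from_entropy} then gives
\[
  S_G(\mathcal{P}_A\mathcal{P}_B) = \sum_{i=1}^{W_A}\sum_{j=1}^{W_B} p_i^{(A)} p_j^{(B)}\, \mathcal{G}\!\left(\log\frac{1}{p_i^{(A)} p_j^{(B)}}\right).
\]
Next, the additivity of the logarithm yields $\log\bigl(1/(p_i^{(A)} p_j^{(B)})\bigr) = \log(1/p_i^{(A)}) + \log(1/p_j^{(B)})$. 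Writing $t_i = \log(1/p_i^{(A)})$ and $u_j = \log(1/p_j^{(B)})$ and using $t_i = \mathcal{F}(\mathcal{G}(t_i))$ together with $u_j = \mathcal{F}(\mathcal{G}(u_j))$, the summand $\mathcal{G}(t_i + u_j)$ becomes
\[
  \mathcal{G}\bigl(\mathcal{F}(\mathcal{G}(t_i)) + \mathcal{F}(\mathcal{G}(u_j))\bigr) = \Phi\bigl(\mathcal{G}(t_i),\mathcal{G}(u_j)\bigr),
\]
by (\ref{Lazard_formal_group_law}). Substituting back and recalling that $\mathcal{G}(t_i) = \mathcal{G}(\log(1/p_i^{(A)}))$ and $\mathcal{G}(u_j) = \mathcal{G}(\log(1/p_j^{(B)}))$ produces exactly the asserted double-sum formula. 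So the computation itself is a two-line substitution.

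The only delicate point — and the main obstacle, such as it is — is to make sure the manipulations with $\mathcal{G}$ and $\mathcal{F}$ are legitimate, since these objects were introduced as formal power series. There are two clean ways to settle this. One is to observe that $\mathcal{G}$ in (\ref{formal_power_series_for_G}) is, up to the nonzero constant $-(1+a\zeta'_G(a))$, assembled from the analytic function $\zeta_G$, which by the discussion after (\ref{zeta_with_real_coefficients}) converges on $[0, 1/\lambda)$; since $0 \le a p_i^{(A)}, a p_j^{(B)} < 1/\lambda$, every composition appearing above converges, and the formal identity $\mathcal{G}\circ(\mathcal{F}(s_1)+\mathcal{F}(s_2)) = \Phi(s_1,s_2)$ transfers to a genuine numerical identity at $s_1 = \mathcal{G}(t_i)$, $s_2 = \mathcal{G}(u_j)$. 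The other is to regard the $p_i^{(A)}$ and $p_j^{(B)}$ as formal indeterminates, carry out the whole computation in the ring of formal power series, where $\mathcal{F}(\mathcal{G}(t)) = t$ holds by construction, and specialize at the end. I would present whichever of these fits the convention the paper has already adopted for its notion of formal group entropy; beyond this bookkeeping there is nothing further to prove.
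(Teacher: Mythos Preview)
Your proof is correct and follows essentially the same route as the paper's: apply Definition~\ref{One_parameter_trace_from_entropy} to the product distribution, split the logarithm, and recognize $\mathcal{G}(t_i+u_j)$ as $\Phi(\mathcal{G}(t_i),\mathcal{G}(u_j))$ via the compositional-inverse relation $\mathcal{F}\circ\mathcal{G}=\mathrm{id}$. Your extra paragraph on the formal-versus-analytic bookkeeping is a point the paper simply elides, so including it only strengthens the argument.
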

		\begin{proof}
			The joint probability distribution $\mathcal{P}_A \mathcal{P}_B$ is given by $p_{ij}^{(A\cup B)} = p_i^{(A)} p_j^{(B)}$. Now,
			\begin{equation}
				\begin{split}
					S_G(\mathcal{P}_A \mathcal{P}_B) & = \sum_{i = 1}^{W_A} \sum_{j = 1}^{W_B} p_{ij}^{(A\cup B)} \mathcal{G} \left( \log \left(\frac{1}{p_{ij}^{(A\cup B)}}\right) \right) \\
					& = \sum_{i = 1}^{W_A} \sum_{j = 1}^{W_B} p_i^{(A)} p_j^{(B)} \mathcal{G} \left( \log \left(\frac{1}{p_{i}^{(A)}}\right) + \log \left(\frac{1}{p_{j}^{(B)}}\right) \right).
				\end{split}
			\end{equation}
			Denote $\log \left(\frac{1}{p_{i}^{(A)}}\right) = t_i^{(A)}$ and $\log \left(\frac{1}{p_{j}^{(B)}}\right) = t_j^{(B)}$. It leads us to write
			\begin{equation}
				\begin{split}
					S_G(\mathcal{P}_A \mathcal{P}_B) & = \sum_{i = 1}^{W_A} \sum_{j = 1}^{W_B} p_i^{(A)} p_j^{(B)} \mathcal{G} \left(t_i^{(A)} + t_j^{(B)} \right) \\
					& = \sum_{i = 1}^{W_A} \sum_{j = 1}^{W_B} p_i^{(A)} p_j^{(B)} \mathcal{G} \left(\mathcal{F} \left(s_i^{(A)} \right) + \mathcal{F} \left(s_j^{(B)}\right) \right),
				\end{split}
			\end{equation}
			where $\mathcal{F}$ is the compositional inverse of $\mathcal{G}$ as well as $\mathcal{F} \left(s_i^{(A)} \right) = t_i^{(A)}$ and $\mathcal{F} \left(s_j^{(B)} \right) = t_j^{(B)}$. Applying Lazard formal group law we have 
			\begin{equation}
				\begin{split}
					S_G(\mathcal{P}_A \mathcal{P}_B) & = \sum_{i = 1}^{W_A} \sum_{j = 1}^{W_B} p_i^{(A)} p_j^{(B)} \Phi \left(s_i^{(A)}, s_j^{(B)}\right) \\
					& = \sum_{i = 1}^{W_A} \sum_{j = 1}^{W_B} p_i^{(A)} p_j^{(B)} \Phi \left(\mathcal{F}^{-1} \left(t_i^{(A)} \right), \mathcal{F}^{-1} \left(t_j^{(B)} \right) \right)\\
					& = \sum_{i = 1}^{W_A} \sum_{j = 1}^{W_B} p_i^{(A)} p_j^{(B)} \Phi \left(\mathcal{G} \left(\log \left(\frac{1}{p_{i}^{(A)}}\right) \right), \mathcal{G} \left(\log \left(\frac{1}{p_{j}^{(B)}}\right) \right) \right).
				\end{split}
			\end{equation}
		\end{proof}
	
		The axiom \ref{axiom_4} is generalised by the composition law of the Lazard formal group law mentioned in equation  (\ref{Lazard_formal_group_law}) \cite[theorem 1]{tempesta2016beyond}.

	\section{Conclusion}

		This article is at the interface of the dynamical system, information and graph theory. It focuses on the information-theoretic entropy of a discrete probability distribution.  This article has a two-fold significance. It presents a physical significance for selecting a particular class of graphs in the literature of the Ihara zeta function. We begin with a dynamical system consists of a billiard ball moving between the reflectors. We describe the reflectors as the vertices of a combinatorial graph. An edge between two vertices represents the possibility of movement of the ball between the corresponding reflectors. A bi-infinite path generated by the movement of the ball represents a bi-infinite walk in the graph. Every bi-infinite walk can be decomposed into prime cycles in the graph. The number of prime cycles of finite length can be expressed in terms of the adjacency matrix of an oriented line graph. We can represent this system in terms of symbolic dynamics over the corresponding graph. The Ihara zeta function is the dynamical zeta function for this system. It can be represented as a formal power series. Note that, this formal power series depends on the distribution of reflectors in the system.  Now the idea of Ihara entropy is introduced. It is an entropy in terms of the Ihara zeta function. The composition law of this entropy is induced by the Lazard formal group law. It also satisfy the other properties of the Shannon-Khinchin axioms.

	\section*{Acknowledgment}

		SD is thankful to Dr. Subhashish Banerjee who introduced the author to the Ihara Zeta function and its applications in quantum information theory.


\end{document}